\newcommand{\before}{\ensuremath{\prec}}
\begin{document}

\title{An Upper Bound on the Number of Circular Transpositions to Sort a Permutation}

\author{Anke van Zuylen\corref{cor1}}
\ead{anke@wm.edu}
\author{James Bieron}
\ead{jcbieron@email.wm.edu}
\author{Frans Schalekamp}
\ead{frans@wm.edu}
\author{Gexin Yu\fnref{fn1}}
\ead{gyu@wm.com}
\cortext[cor1]{Corresponding author}
\fntext[fn1]{Research was supported in part by NSA grant H98230-12-1-0226. }
\address{Department of Mathematics, The College of William and Mary, Williamsburg, VA, 23185, USA}

\begin{abstract}
We consider the problem of upper bounding the number of circular transpositions needed to sort a permutation.
It is well known that any permutation can be sorted using at most $\frac {n(n-1)}2$ adjacent transpositions.
We show that, if we allow all adjacent transpositions, as well as the transposition that interchanges the element in position $1$ with the element in the
last position, then the number of transpositions needed is at most $\lfloor\frac {n^2}4\rfloor$. This answers an open question posed by Feng, Chitturi and Sudborough (2010).
\end{abstract}
\begin{keyword}
Permutations; Cayley graphs; diameter; sorting
\end{keyword}
\maketitle

\newtheorem{definition}{Definition}
\newtheorem{theorem}{Theorem}
\newtheorem{lemma}[theorem]{Lemma}
\newtheorem{corollary}[theorem]{Corollary}
\newtheorem{conjecture}[theorem]{Conjecture}
\newtheorem{proposition}[theorem]{Proposition}
\newtheorem{claim}[theorem]{Claim}
\newtheorem{observation}{Observation}
\newtheorem{fact}{Fact}
\newtheorem{invariant}{Invariant}
\newenvironment{proof}{\noindent{\bf Proof:} \hspace*{1em}}{
    \hspace*{\fill} $\Box$ \\}
\newenvironment{proof_claim}{\noindent{\it Proof:} \hspace*{1em}}{
    \hspace*{\fill} $\diamond$ \\}

\section{Introduction}
The problem of sorting $n$ numbers has been well studied under a variety of different constraints.  Any introductory computer science course will include a discussion of sorting, and probably will begin with the simplest sorting algorithm of all, bubble sort.  Bubble sort, which can be described as easily as ``if two adjacent elements are out of order relative to each other, swap them'' is provably optimal in number of transpositions needed if you are only allowed to swap adjacent elements.  It can be shown that the number of adjacent transpositions that is needed to sort a given list of numbers is equal to the number of ``inversions'' in the list, i.e., pairs of elements that are out of order (but not necessarily adjacent).  Thus, the $\frac{n(n-1)}{2}$ bound on the number of inversions in a permutation of length $n$ also serves as the bound on the number of adjacent transpositions needed to sort said permutations.

Suppose we drop the requirement that the elements we swap be adjacent.  If we allow any two elements to be swapped, then it is known that $(n-1)$ transpositions will be needed in the worst case.  The algorithm here could be, ``swap the largest number into its proper place, and then sort the remaining $(n-1)$ elements recursively.''  There is no equivalent of ``inversions'' in this case, because we can always move at least one element all the way to its proper place; we do not need a local quantity telling us if two adjacent elements are out of order.

In this paper, we consider the properties of a third set of transpositions to be used for sorting a permutation: the set of all cyclically adjacent transpositions. By cyclically adjacent transpositions, we mean all normal adjacent transpositions plus an additional transposition: the one that swaps the first and last elements in the permutation.  This is a natural extension of the normal case of adjacent transpositions.  Consider if you were sorting elements in a data structure that was cyclic, instead of linear in nature.  Clearly, bubble sort is an option in this case; we still have all adjacent transpositions, so we could just ignore the added cyclic transposition and proceed as before.  As one would expect, this is far from an optimal algorithm.  The problem of designing an algorithm to find an optimal sequence of cyclically adjacent transpositions to sort a permutation was examined and solved by Jerrum~\cite{Jerrum85}.  However, while an optimal algorithm is given, the question of an upper bound on the number of swaps that will be required in the worst case remained unanswered.  In 2010, this problem was posed by Feng, Chitturi and Sudborough~\cite{FengCS10} in a paper in which they prove that $\lfloor \frac{n^2}{4} \rfloor$ is a lower bound, and they conjecture that this bound is tight.  Here, we prove that $\lfloor \frac{n^2}{4} \rfloor$ is indeed an upper bound on the number of cyclically adjacent transpositions needed to sort any permutation of length $n$, thus resolving the open question of Feng et al.~\cite{FengCS10}.
\section{Preliminaries}
We now introduce the notation we will use throughout this note.

Let $\pi$ be a permutation of $\{1, \ldots, n\}$. We will refer to $\pi(i)$ as the position of element $i$.
If $\pi(i)=p$, we have $\pi^{-1}(p)=i$, i.e., $\pi^{-1}(p)$ gives the element that is in position $p$.
We will sometimes write $\pi^{-1}$ as the ordered sequence $(\pi^{-1}(1), \pi^{-1}(2), \ldots, \pi^{-1}(n))$.
In the following, we will use $i,j,k$ when we want to refer to an element in $\{1, \ldots, n\}$ and $p,q,r$ to refer to a position in $\{1, \ldots, n\}$.

Given a permutation $\pi$ and $p,q\in \{1, \ldots, n\}$, applying the transposition $(p,q)$ to $\pi$ means that we ``swap'' the elements in positions $p$ and $q$ to obtain a new permutation $\tilde \pi$, where $\tilde\pi^{-1}(p)=\pi^{-1}(q), \tilde \pi^{-1}(q)=\pi^{-1}(p)$, and $\tilde \pi^{-1}(r)=\pi^{-1}(r)$ for all $r\in \{1, \ldots, n\}\backslash\{p,q\}$.

We say a transposition $(p,q)$ is adjacent if $q=p+1$, and we say a transposition $(p,q)$ is cyclically adjacent if either $q=p+1$, or $p=n$ and $q=1$.
For ease of exposition, we will use $(p,p+1)$ to denote a cyclically adjacent transposition (i.e., the fact that $p+1$ is taken modulo $n$ is implicit).
We will sometimes refer to a transposition $(p, p+1)$  when applied to $\pi$ with $\pi^{-1}(p)=i, \pi^{-1}(p+1)=j$, as the {\it swap} $(i,j)$ of elements $i$ and $j$.
We remark that a swap is denoted as an ordered pair, where the first element moves ``in clockwise direction'', i.e., from position $p$ to $p+1$ and the second element moves in ``counterclockwise direction'', i.e., from position $p+1$ to $p$. 

We say that $Q = ( q_1, q_2, \ldots, q_m )$ is a sequence of cyclically adjacent swaps for $\pi$, if for every $i=1,2,\ldots,m$ we have that $q_i$ is a cyclically adjacent swap for the permutation that results from performing the swaps $q_1, q_2, \ldots, q_{i-1}$ in order on permutation $\pi$.

We say a permutation $\pi$ is sorted by a sequence of transpositions, if we obtain the identity after the sequence of transpositions is applied to $\pi$.
It is well known that any permutation $\pi$ can be sorted by at most $\frac{n(n-1)}2$ adjacent transpositions.

In this note, we will show the following theorem. This answers an open question of Feng, Chitturi and Sudborough~\cite{FengCS10}.
\begin{theorem}
Given any permutation $\pi$ of $\{1,\ldots,n\}$, there exists a sequence of at most $\frac{n^2}{4}$ cyclically adjacent transpositions to sort $\pi$.
\end{theorem}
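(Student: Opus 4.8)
The plan is to reduce the problem to a routing question on the cycle and to show that one can sort using only \emph{doubly-good} swaps, i.e., swaps $(i,j)$ in which the clockwise-moving element $i$ and the counterclockwise-moving element $j$ \emph{both} move closer to their targets. First I would record the key invariant: in every swap one element moves one step clockwise and the other moves one step counterclockwise, so the total signed displacement $\sum_i \delta_i$, where $\delta_i$ is the net number of clockwise minus counterclockwise steps taken by element $i$, is unchanged by a swap and hence must equal $0$ for any sorting sequence. Writing $r_i = (i - \pi(i)) \bmod n$ for the clockwise distance from the current position of $i$ to its target, each element either travels $r_i$ steps clockwise ($\delta_i = r_i$) or $n - r_i$ steps counterclockwise ($\delta_i = r_i - n$); since $\sum_i r_i \equiv 0 \pmod n$, say $\sum_i r_i = tn$, the constraint $\sum_i \delta_i = 0$ forces exactly $t$ elements to be routed counterclockwise. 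If every swap is doubly-good, the number of swaps is exactly $\frac12\sum_i|\delta_i|$, so it suffices to (a) choose the clockwise/counterclockwise assignment so that $\sum_i|\delta_i|\le \frac{n^2}{2}$, and (b) show this assignment can be realized by doubly-good swaps alone.

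For (a), I would route the $t$ elements with the \emph{largest} values of $r_i$ counterclockwise and the rest clockwise; this is exactly the assignment minimizing $\sum_i|\delta_i| = \sum_i r_i + \sum_{\mathrm{ccw}}(n - 2r_i)$ subject to sending $t$ elements counterclockwise. I expect the bound $\sum_i|\delta_i|\le \frac{n^2}{2}$ to follow by comparing this quantity against the ``each element takes its shorter arc'' assignment, for which every term is at most $\lfloor n/2\rfloor$ and the bound is immediate, and then accounting for the net cost of the direction flips forced by the balance constraint; the slight asymmetry between $n$ even and odd is what produces the floor in $\lfloor n^2/4\rfloor$.

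For (b), which I expect to be the main obstacle, I would think of the clockwise elements and counterclockwise elements as two opposing streams of particles on the cycle. A doubly-good swap is available precisely at a position where a clockwise element is immediately followed, in the clockwise direction, by a counterclockwise element, and on a cycle carrying particles of both types the number of such boundaries equals the number of opposite boundaries and is therefore at least one, so progress is always possible as long as both streams are nonempty and no already-sorted element blocks the way. The real difficulty is to rule out a \emph{deadlock} in which finished elements separate the remaining clockwise and counterclockwise elements into arcs that can no longer interact, and to guarantee that each element halts at its target rather than being carried past it. I would control this by proving that the chosen assignment is \emph{non-crossing}: the clockwise elements occur in the same cyclic order as their targets, and likewise for the counterclockwise elements, so that no two same-direction particles ever need to pass each other and no particle is ever driven through a finished position. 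Establishing this non-crossing property for the balance-optimal assignment, and then fixing an order of swaps, for instance always completing an element whenever possible, under which the doubly-good process provably terminates at the identity in $\frac12\sum_i|\delta_i|$ steps, is the technical heart of the argument.
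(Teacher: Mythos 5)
Your plan hinges on step (b): that the balance-optimal routing can always be realized by doubly-good swaps alone, so that the number of swaps equals $\frac12\sum_i|\delta_i|$. This step is false, and the paper itself exhibits the counterexample that kills it. Take $n=4$ and $\pi^{-1}=(3,2,1,4)$, i.e., element $3$ in position $1$, element $1$ in position $3$, elements $2$ and $4$ already home. Here $r_1=r_3=2$, $r_2=r_4=0$, $\sum_i r_i = 4 = 1\cdot n$, so exactly one element must be routed counterclockwise, forcing $\delta=(2,0,-2,0)$ or $\delta=(-2,0,2,0)$. In either case there is \emph{no} doubly-good swap available at the very first step: each of the four cyclically adjacent pairs involves at least one element with $\delta=0$ or an element that would be pushed the wrong way. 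Note that your proposed fix cannot help: with a single clockwise mover and a single counterclockwise mover the assignment is trivially non-crossing, yet the deadlock you feared (finished elements separating the two streams) occurs immediately. Worse, no reordering or cleverness about which swaps to perform can rescue the count, because this permutation provably requires $3$ cyclically adjacent transpositions while $\frac12\sum_i|\delta_i|=2$; the quantity $\frac12\sum_i|\delta_i|$ is only a \emph{lower} bound on the number of swaps, and it is not always attainable.

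This is precisely why the paper abandons the ``monotone progress'' idea and argues differently. It invokes Jerrum's characterization: every permutation admits a minimum-length sorting sequence in which each pair of elements is swapped at most once and the net displacement vector satisfies $d(i)-d(j)\le n$ for all $i,j$. It then bounds the length of \emph{any} such sequence by $\frac{n^2}{4}$ via induction on $n$: pick an element $k$ whose displacement is extreme ($d(k)=d_{\max}$ or $d(k)=d_{\min}$, whichever satisfies $|d(k)|\le \frac n2$), delete all swaps involving $k$, show the reduced sequence still satisfies the same two hypotheses on $n-1$ elements, and observe that $k$ participates in exactly $|d(k)|\le \frac n2$ swaps, giving at most $\frac{(n-1)^2}{4}+\frac n2$ swaps in total. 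The crucial difference from your approach is that this argument never needs each swap to make ``local progress''; it charges swaps to elements globally, which is how it absorbs the extra swaps (beyond $\frac12\sum_i|d(i)|$) that configurations like $(3,2,1,4)$ genuinely require. If you want to salvage your framework, you would need to account for swaps in which a finished or wrong-direction element is disturbed, and that accounting is essentially what the paper's induction accomplishes.
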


To prove the theorem, we begin by reviewing results by Jerrum~\cite{Jerrum85}.
Given a sequence of cyclically adjacent transpositions that sort $\pi$,
we consider the corresponding sequence of swaps of elements. For this sequence of swaps, we let $c(i,j)$ be the number of times swap $(i,j)$ occurs minus the number of times swap $(j,i)$ occurs. We define the net clockwise displacement for element $i$ as $d(i)=\sum_{j\neq i} c(i,j)$.
Then we have that
\begin{equation}\sum_{i=1}^n d(i)=0,\label{eq:feas1}\end{equation}
since $\sum_i d(i) = \sum_i \sum_{j\neq i} c(i,j) = \sum_{(i,j): c(i,j)>0} c(i,j) + \sum_{(j,i):c(j,i)<0} c(j,i)$$=
\sum_{(i,j): c(i,j)>0} c(i,j) - \sum_{(i,j):c(i,j)>0} c(i,j)=0$, where the penultimate equality uses the fact that $c(i,j)=-c(j,i)$.
Since the sequence of transpositions sorts $\pi$, it must be the case that
\begin{equation}
\pi(i)+d(i) \equiv i \pmod n,\mbox{ for every $i\in\{1, \ldots, n\}$.}\label{eq:feas2}
\end{equation}

Jerrum's key result is a characterization of the net displacement vector $d$ of a minimum length sequence of cyclically adjacent transpositions to sort $\pi$.
We first show how, given a displacement vector $d$ that satisfies (\ref{eq:feas1}) and (\ref{eq:feas2}), we can find a sequence of cyclically adjacent transpositions that sort $\pi$ and have net displacement vector $d$.
We then give the expression given by Jerrum for $c(i,j)$, the net numer of times swap $(i,j)$ occurs in this sequence, as a function of $\pi$ and $d$.
Finally, we give Jerrum's main result which characterizes the displacement vector $d$ that corresponds to the minimum length sequence of transpositions that sort $\pi$.

\begin{lemma}\label{lem:sort}
Given a displacement vector $d$ that satisfies (\ref{eq:feas1}) and (\ref{eq:feas2}) with respect to some permutation $\pi$, a sequence of cyclically adjacent swaps that sort $\pi$ and has net displacements given by $d$ is found by repeatedly swapping cyclically adjacent elements $(i,j)$ such that $d(i)>d(j)$, and decreasing $d(i)$ by 1 and increasing $d(j)$ by 1.
\end{lemma}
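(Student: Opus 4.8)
The plan is to regard the given vector $d$ not as fixed data but as a \emph{residual} displacement that is carried along and updated in lockstep with the permutation: each time we perform the swap $(i,j)$ (with $i$ in position $p$, $j$ in position $p+1$, and $d(i)>d(j)$) we also set $d(i)\gets d(i)-1$ and $d(j)\gets d(j)+1$, exactly as prescribed. First I would verify that the two feasibility conditions are \emph{invariants} of this update rule. Condition (\ref{eq:feas1}) is immediate, since each swap lowers one coordinate by $1$ and raises another by $1$, leaving $\sum_i d(i)$ unchanged. For (\ref{eq:feas2}), note that the swap sends $\pi(i)\mapsto \pi(i)+1$ and $\pi(j)\mapsto\pi(j)-1$ (mod $n$) while $d(i)\mapsto d(i)-1$ and $d(j)\mapsto d(j)+1$; hence both $\pi(i)+d(i)$ and $\pi(j)+d(j)$ are preserved modulo $n$, and all other coordinates are untouched.

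Next I would pin down the termination condition and show it coincides with being sorted. The process stops exactly when no legal swap is available, i.e.\ when no cyclically adjacent pair has $d(i)>d(j)$; I claim this happens precisely when $d$ is identically zero. If $d\equiv 0$ then (\ref{eq:feas2}) forces $\pi(i)\equiv i\pmod n$, so $\pi$ is the identity and is sorted; conversely I will show below that whenever $d\not\equiv 0$ a swap still exists. The produced sequence also has the correct net displacement: writing $d^{(0)}$ for the initial vector and $d^{(T)}$ for the vector after all $T$ swaps, each swap of $(i,j)$ contributes $+1$ to the actual clockwise displacement of $i$ and $-1$ to that of $j$ while lowering the residual $d(i)$ by $1$ and raising $d(j)$ by $1$; telescoping over the whole run shows that the net clockwise displacement accumulated by element $i$ equals $d^{(0)}(i)-d^{(T)}(i)=d^{(0)}(i)$, since $d^{(T)}\equiv 0$, which is the required net displacement.

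The heart of the argument is to produce, at every non-sorted configuration, a swap that makes measurable progress, so that the process must terminate. For a position $p$ let $e(p)=d(\pi^{-1}(p))$ denote the residual of the element currently occupying $p$; a legal swap corresponds to a cyclic \emph{descent} $e(p)>e(p+1)$. I would measure progress by the potential $\Phi=\sum_p e(p)^2=\sum_i d(i)^2$, a non-negative integer. A short computation shows that performing the swap at such a descent changes $\Phi$ by $2\bigl(e(p+1)-e(p)+1\bigr)$, which is $\le 0$ and is strictly negative exactly when the descent is \emph{steep}, i.e.\ $e(p)\ge e(p+1)+2$. The key structural claim is therefore: whenever $d\not\equiv 0$, some position admits a steep descent. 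To see this, observe that (\ref{eq:feas2}) makes the map $f(p)=\bigl(p+e(p)\bigr)\bmod n$ a bijection. If no descent were steep, then $e(p+1)\ge e(p)-1$ for all $p$, so $f(p+1)-f(p)=1+\bigl(e(p+1)-e(p)\bigr)\ge 0$ around the cycle, with total increase $n$; non-constancy of $e$ (equivalent to $d\not\equiv 0$, since $\sum_i d(i)=0$) forces some clockwise drop of exactly $1$, i.e.\ a step with $f(p+1)=f(p)$, contradicting injectivity of $f$ modulo $n$. Hence a steep descent always exists while $\pi$ is unsorted, we may always choose such a swap, and $\Phi$ strictly decreases; being a non-negative integer it reaches $0$ after finitely many steps, at which point $d\equiv 0$ and $\pi$ is sorted.

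I expect the main obstacle to be precisely this structural claim. The naive observation that ``some descent exists while unsorted'' is easy, but it only guarantees that $\Phi$ is non-increasing, which does not rule out the process cycling forever among difference-one swaps (such swaps leave the values $e(\cdot)$ in place while merely permuting elements). Extracting a \emph{steep} descent, and hence a strict decrease of the potential, is where the bijectivity coming from (\ref{eq:feas2}) must be invoked, and getting that implication right, rather than merely exhibiting any legal swap, is the crux of the proof.
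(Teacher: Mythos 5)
Your proof is correct, and its skeleton — checking that (\ref{eq:feas1}) and (\ref{eq:feas2}) are invariants of the update, then forcing termination via the potential $\sum_i d(i)^2$ — is the same as the paper's. The difference is at the crux. You prove the weaker claim that \emph{some} steep descent ($e(p)\ge e(p+1)+2$) exists whenever $d\not\equiv 0$, via a cyclic averaging argument, and then restrict the procedure to choose only steep swaps. The paper instead observes that \emph{every} legal swap is automatically steep: if $i$ sits in position $p$ and $j$ in position $p+1$, then (\ref{eq:feas2}) gives $d(i)-d(j)\equiv i-j+1 \pmod n$, so $d(i)=d(j)+1$ would force $i\equiv j \pmod n$, i.e., $i=j$, a contradiction. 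In your notation this is exactly the statement that $f(p+1)\neq f(p)$ for every $p$ — a direct, local application of the injectivity of $f$, which you only invoke globally after assuming no steep descent exists. The difference-one swaps you worry about cycling through simply cannot occur. This matters for the literal statement of the lemma: as written, it asserts that the procedure works with an \emph{arbitrary} choice of legal swap at each step, which is what the paper proves; your argument only shows that a particular selection rule (always take a steep descent) terminates, leaving open — as far as your proof goes — whether an adversarial sequence of difference-one swaps could cycle forever. Since such swaps do not exist, nothing you claim is false, and for the paper's application (mere existence of a sorting sequence with net displacement vector $d$) your version suffices; but you can both simplify your argument and prove the lemma verbatim by applying your own injectivity observation locally to the cyclically adjacent pair being swapped, rather than globally around the cycle.
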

\begin{proof}
Note that cyclically adjacent elements $(i,j)$ such that $d(i)>d(j)$ exist unless $d(i)=0$ for every $i=1, \ldots, n$, since $d$ satisfies $\sum_{i=1}^n d(i)=0$.
After executing the swap, we decrease $d(i)$ by 1 and we increase $d(j)$ by 1. Let $\tilde d$ be the new displacement vector. It is easily verified that the new permutation $\tilde\pi$ and displacement vector $\tilde d$ obtained after executing the swap satisfy (\ref{eq:feas1}) and (\ref{eq:feas2}). Hence, if this process terminates, then it will result in a sequence of cyclically adjacent swaps to sort $\pi$ with net displacement vector $d$.

We now argue that $\sum_k (\tilde d(k))^2 < \sum_k (d(k))^2$, which implies that this process does indeed terminate.

Note that 
\begin{eqnarray*}
\sum_k (\tilde d(k))^2 - \sum_k (d(k))^2 &=& (d(i)-1)^2 + (d(j)+1)^2 - \left((d(i))^2 +(d(j))^2\right) \\
&=& -2d(i)+2d(j) + 2.\end{eqnarray*}
Now, note that, since $\pi(i)=p$ and $\pi(j)\equiv p+1 \pmod n$, then $d(i)>d(j)$ implies that $d(i)\ge d(j)+2$, since $\pi$ and $d$ satisfy (\ref{eq:feas1}).
Hence $-2d(i)+2d(j)+2\le -2$.
\end{proof}

We note that this lemma generalizes sorting $\pi$ by using only adjacent transpositions (i.e., bubble sort) in a natural way: In that case, we take $d(i) = i - \pi(i),$ for every $i\in\{1, \ldots, n\}$. Now, let $i=\pi^{-1}(p)$ and $j=\pi^{-1}(p+1)$. Then $d(i) = i-p$ and $d(j) = j-(p+1)$, and hence $d(i)>d(j)$, implies that $i>j-1$, and hence, $i>j$, since $i\neq j$.  So in this case, the algorithm in Lemma~\ref{lem:sort} is simply the bubble sort algorithm, in which we swap adjacent elements $(i,j)$ if $i>j$.

We now give two results that were shown by Jerrum. The expression we use in the next lemma gives an expression for $c(i,j)$, the net number of times swap $(i,j)$ occurs in the sequence resulting from Lemma~\ref{lem:sort}. The expression is essentially the same as the expression derived on page 283 of~\cite{Jerrum85}. For completeness, we give a proof in the appendix.
\begin{lemma}[Jerrum\cite{Jerrum85}]\label{lem:Jerrum1}
Given a displacement vector $d$ that satisfies (\ref{eq:feas1}) and (\ref{eq:feas2}) with respect to some permutation $\pi$, any sequence of cyclically adjacent transpositions that sorts $\pi$ and has net displacement vector $d$ has
\[c(i,j) = \left\{\begin{array}{ll}1+\max\{m: \pi(i)+d(i)> \pi(j)+d(j)+mn \}&\mbox{
if $\pi(i)<\pi(j)$,}\\
\max\{m: \pi(i)+d(i)> \pi(j)+d(j)+mn\} &\mbox{
if $\pi(i)>\pi(j)$}.
\end{array}\right.\]
\end{lemma}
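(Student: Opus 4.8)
The plan is to interpret $c(i,j)$ as a signed crossing number of the trajectories of elements $i$ and $j$, and to show that this quantity depends only on the initial positions $\pi(i),\pi(j)$ and the net displacements $d(i),d(j)$, which is exactly why the formula holds for \emph{any} sequence with net displacement vector $d$. Concretely, I would lift the circular positions to $\mathbb{R}$: fix the initial lifted position of $i$ to be $\pi(i)$ and of $j$ to be $\pi(j)$, and then have each swap move the participating element(s) by $+1$ (clockwise) or $-1$ (counterclockwise). Since the total net clockwise displacement of $i$ is $d(i)$ and of $j$ is $d(j)$, the lifted trajectories end at $\pi(i)+d(i)$ and $\pi(j)+d(j)$ respectively. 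I then consider the lifted difference $\phi = \mathrm{pos}_i - \mathrm{pos}_j$, which runs from $\phi_0 = \pi(i)-\pi(j)$ to $\phi_1 = (\pi(i)+d(i))-(\pi(j)+d(j))$. Since $i$ and $j$ always occupy distinct positions, $\phi$ never takes a value in $n\mathbb{Z}$; and since $\pi(i)\neq\pi(j)$ and $i\neq j$, neither endpoint $\phi_0$ nor $\phi_1$ lies in $n\mathbb{Z}$ either.

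The core step is the claim that $c(i,j)$ equals the net number of upward crossings (up minus down) of the lattice $n\mathbb{Z}$ by $\phi$. A swap $(i,j)$ occurs precisely when $i$ sits one position counterclockwise of $j$, i.e. when $\phi \equiv -1 \pmod n$, and it sends $\phi$ to $\equiv +1 \pmod n$, so $\phi$ increases by $2$ through a multiple of $n$; symmetrically a swap $(j,i)$ decreases $\phi$ by $2$ through a multiple of $n$. I would then verify that a swap involving exactly one of $i,j$ together with a third element changes $\phi$ by $\pm 1$ but \emph{cannot} cross a multiple of $n$, since crossing $mn$ would force $i$ and $j$ momentarily into the same position, which is impossible. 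Hence only the direct swaps contribute crossings, and the signed crossing count is exactly $c(i,j)$. Because a signed count of crossings of $n\mathbb{Z}$ is an invariant of the path depending only on its endpoints, we obtain $c(i,j) = \lfloor \phi_1/n \rfloor - \lfloor \phi_0/n \rfloor$.

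It remains to evaluate the two floor terms. For the endpoint $\phi_1$, since $\phi_1 \notin n\mathbb{Z}$ we have $\lfloor \phi_1/n\rfloor = \max\{m : mn < \phi_1\} = \max\{m : \pi(i)+d(i) > \pi(j)+d(j)+mn\}$, which is exactly the expression appearing in the lemma. For the starting endpoint, $\phi_0 = \pi(i)-\pi(j)$ lies in $\{1,\dots,n-1\}$ when $\pi(i)>\pi(j)$, giving $\lfloor\phi_0/n\rfloor = 0$, and in $\{-(n-1),\dots,-1\}$ when $\pi(i)<\pi(j)$, giving $\lfloor\phi_0/n\rfloor=-1$; subtracting these two values produces precisely the two cases of the formula. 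I expect the main obstacle to be the crossing-count claim in the middle paragraph: one must argue carefully, handling the wrap-around adjacency $(n,1)$ and the bookkeeping of the lift, that third-element swaps never cross a multiple of $n$ and that each direct swap contributes exactly one signed crossing, so that the net crossing number genuinely equals $c(i,j)$ and is independent of the particular sorting sequence.
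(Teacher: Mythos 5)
Your proposal is correct, but it reaches the formula by a genuinely different route than the paper. The paper's proof (in the appendix) is an induction along the swap sequence: it treats the displayed expression as a potential function of the current pair $(\pi,d)$, and verifies algebraically that one cyclically adjacent transposition decreases this quantity by exactly one for the swapped pair and leaves it unchanged for all other pairs, with separate case analysis for an interior swap ($p<n$) versus the wrap-around swap ($p=n$) and for $\pi(i)<\pi(j)$ versus $\pi(i)>\pi(j)$; since the expression vanishes at the identity with zero residual displacement, the initial value equals the net swap count. Your argument instead lifts the trajectories to $\mathbb{R}$ and identifies $c(i,j)$ with a signed crossing count of $n\mathbb{Z}$ by the difference path $\phi$, and all the key steps check out: a direct swap moves $\phi$ by $\pm 2$ from $mn\mp 1$ to $mn\pm 1$ and contributes exactly one signed crossing; a swap with a third element moves $\phi$ by $\pm 1$ and cannot cross because $\phi$ never lies in $n\mathbb{Z}$ (distinct elements never share a position), so the band index $\lfloor \phi/n\rfloor$ changes only at direct swaps, giving $c(i,j)=\lfloor \phi_1/n\rfloor - \lfloor \phi_0/n\rfloor$; and the floor evaluations ($\lfloor \phi_0/n\rfloor$ equal to $0$ or $-1$ according to the sign of $\pi(i)-\pi(j)$) reproduce exactly the two cases of the formula. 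What your approach buys: the wrap-around transposition needs no special treatment (it is invisible in the lift), and it makes transparent why $c(i,j)$ depends only on the endpoints $(\pi,d)$ and not on the particular sorting sequence, which is precisely the ``any sequence'' claim of the lemma and is only implicit in the paper's induction. What the paper's approach buys: it is purely algebraic and self-contained, needing no lifting formalism; the bookkeeping your lift absorbs (each element's lifted endpoint is $\pi(i)+d(i)$, both endpoints avoid $n\mathbb{Z}$) is carried there by the explicit case analysis instead.
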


Jerrum's main result is a characterization of the net displacement vector $d$ of the minimum length sequence of cyclically adjacent transpositions for a given permutation $\pi$.
The following theorem summarizes the results in Corollary 3.7 and Theorem 3.9 of~\cite{Jerrum85}.
\begin{theorem}[Jerrum~\cite{Jerrum85}]
A sequence of cyclically adjacent transpositions that sort permutation $\pi$ is of minimum length if and only if 
each pair of elements is swapped at most once, and the net displacement vector $d$
satisfies
\begin{equation}d(i)-d(j)\le n \mbox{ for all }i,j\in\{1, \ldots, n\}.\label{eq:opt}\end{equation}
\end{theorem}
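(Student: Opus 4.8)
The plan is to reduce the statement to a convex optimization problem over net displacement vectors, and then to recognize condition (\ref{eq:opt}) as the optimality criterion for that problem.

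First I would argue that the length of any sorting sequence equals its total number of swaps, $\sum_{\{i,j\}}(a_{ij}+a_{ji})$, where $a_{ij}$ counts the swaps $(i,j)$. Since $a_{ij}+a_{ji}\ge |a_{ij}-a_{ji}|=|c(i,j)|$, with equality exactly when the pair $\{i,j\}$ is never swapped in both directions, the length is at least $\sum_{\{i,j\}}|c(i,j)|$. Moreover, Lemma~\ref{lem:sort} produces, for every feasible net displacement $d$, a sequence that only ever swaps $(i,j)$ when $d(i)>d(j)$, hence never undoes a swap and attains the bound $\sum_{\{i,j\}}|c(i,j)|$. Consequently the minimum length over all sorting sequences equals $\min_d \sum_{\{i,j\}}|c(i,j)|$ taken over $d$ satisfying (\ref{eq:feas1}) and (\ref{eq:feas2}), and a sequence is of minimum length precisely when it is non-wasteful and realizes a minimizing $d$. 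In particular, ``each pair is swapped at most once'' is equivalent to the conjunction of non-wastefulness with $|c(i,j)|\le 1$ for every pair.

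Next I would put the objective in closed form using Lemma~\ref{lem:Jerrum1}. Writing each feasible $d$ as $d(i)=(i-\pi(i))+k_i n$ with $k_i\in\mathbb Z$ and $\sum_i k_i=0$ (forced by (\ref{eq:feas1})), one gets $\pi(i)+d(i)=i+k_i n$, and substituting into Lemma~\ref{lem:Jerrum1} collapses the maxima, giving, for $i<j$, $c(i,j)=(k_i-k_j)-\chi_{ij}$, where $\chi_{ij}=1$ if $\pi(i)>\pi(j)$ and $\chi_{ij}=0$ otherwise. Thus the objective becomes $F(k)=\sum_{i<j}\bigl|k_i-k_j-\chi_{ij}\bigr|$, a convex function of the vector of winding numbers $k$. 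A short calculation of the same type shows that (\ref{eq:opt}) is equivalent to the inequalities $(k_i-k_j-1)\,n\le (\pi(i)-i)-(\pi(j)-j)$ for all ordered pairs, and that these inequalities force $|c(i,j)|\le 1$ for every pair. Hence, once (\ref{eq:opt}) is in hand, the first condition of the theorem reduces to non-wastefulness, and the two stated conditions together become exactly ``non-wasteful and $d$ satisfies (\ref{eq:opt})''.

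It then remains to prove the heart of the statement: a feasible $d$ minimizes $F$ if and only if it satisfies (\ref{eq:opt}). I would do this by an exchange argument on the winding numbers. For necessity, assuming (\ref{eq:opt}) fails for some pair, I would exhibit a modification of $k$ that raises some winding numbers and lowers others by one, preserving $\sum_i k_i=0$, and that strictly decreases $F$, contradicting minimality. For sufficiency, I would use convexity of $F$ together with (\ref{eq:opt}) to show that no such modification can decrease $F$, so that any $k$ satisfying (\ref{eq:opt}) is a global minimizer. Combined with the first two paragraphs, this yields both directions of the theorem. The main obstacle is precisely this exchange argument: because every term of $F$ couples two winding numbers, changing a single coordinate to repair one violated pair simultaneously perturbs all terms incident to that coordinate, so the improving move must be selected globally---most naturally by uniformly shifting an entire threshold set $\{i:k_i\ge t\}$ and compensating elsewhere to preserve the sum. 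Identifying the right move and verifying the sign of the resulting change in $F$, while respecting integrality and the constraint $\sum_i k_i=0$, is the crux that ties condition (\ref{eq:opt}) to minimality.
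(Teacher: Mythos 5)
First, a point of reference: the paper does not actually prove this theorem. It quotes it from Jerrum (Corollary 3.7 and Theorem 3.9 of \cite{Jerrum85}) and explicitly omits the proof, because the rest of the paper only needs the much weaker existence statement---that \emph{some} feasible $d$ satisfying (\ref{eq:opt}) exists---which it establishes by a short potential argument (repeatedly subtract $n$ from $d(i)$ and add $n$ to $d(j)$ when $d(i)-d(j)>n$, noting that $\sum_k|d(k)|$ strictly decreases). So your attempt must be judged on its own terms, and on those terms it has a genuine gap.

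Your first two paragraphs are a sound reformulation: the minimum length equals $\min_d \sum_{\{i,j\}}|c(i,j)|$ over $d$ feasible for (\ref{eq:feas1}) and (\ref{eq:feas2}), the winding-number substitution $\pi(i)+d(i)=i+k_in$ correctly collapses Lemma~\ref{lem:Jerrum1} to $c(i,j)=k_i-k_j-\chi_{ij}$ for $i<j$, and (\ref{eq:opt}) does force $|c(i,j)|\le 1$ (this is Lemma~\ref{lem:prop}(a) plus skew symmetry). (One smaller unproven claim: that the greedy of Lemma~\ref{lem:sort} never swaps a pair in both directions and so attains $\sum_{\{i,j\}}|c(i,j)|$; this is true but needs the bookkeeping from the appendix proof of Lemma~\ref{lem:Jerrum1}, namely that each greedy swap has strictly positive remaining $c$-value and decrements exactly that value by one.) The fatal problem is that after this reformulation, the entire content of the theorem is the equivalence ``$d$ minimizes $F(k)=\sum_{i<j}|k_i-k_j-\chi_{ij}|$ subject to $\sum_i k_i=0$, $k$ integral, if and only if $d$ satisfies (\ref{eq:opt})''---and this is precisely the step you do not prove. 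You describe what you \emph{would} do (an exchange argument shifting threshold sets $\{i:k_i\ge t\}$), correctly identify why single-coordinate moves fail, and then stop, labeling this ``the crux.'' Neither direction is established. For sufficiency, convexity alone does not suffice: on the integer lattice with a sum constraint, unimprovability with respect to a restricted family of moves implies global minimality only if you prove that family is rich enough (this is in effect a discrete-convexity statement about functions of differences, and it is exactly where the work lies). For necessity, you never exhibit the strictly improving move from a violated pair $d(i)-d(j)>n$, nor verify its sign against all the coupled terms of $F$. As written, the proposal is a correct change of variables wrapped around a placeholder where the theorem's proof should be.
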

We omit the proof since the results in the next section only rely on the fact that for any permutation $\pi$, there exists a sequence of cyclically adjacent transpositions that sort $\pi$ and for which the net displacement vector $d$ satisfies (\ref{eq:opt}).

To find this sequence, we initialize $d(i)=i-\pi(i)$ for $i=1, \ldots, n$. Note that $|d(i)|\le n$ for every $i$.

Now, if $d(i)-d(j) >n$, then $d(i)>0$ and $d(j)<0$. If we subtract $n$ from $d(i)$ and add $n$ to $d(j)$, we obtain a new valid displacement vector $d'$, which has $d'(i)=d(i)-n <0$ and $d'(j)=d(j)+n >0$.  Therefore, $\sum_k |d'(k)| = \sum_{k} |d(k)|  - |d(i)|+|d(i)-n|-|d(j)|+|d(j)+n|=\sum_k |d(k)| + 2n - 2(d(i)-d(j)) < \sum_k |d(k)|$. Hence, this process will terminate. We can then use Lemma~\ref{lem:sort} to find the corresponding sequence of cyclically adjacent transpositions.

\section{An upper bound on the number of cyclically adjacent transpositions to sort $\pi$.}
By the results from the previous section, we know that for any permutation $\pi$ of $\{1, \ldots, n\}$, there exists a sequence of cyclically adjacent transpositions with net displacement vector $d$ which satisfies (\ref{eq:opt}) that sorts $\pi$.
If we apply a cyclically adjacent swap $(i,j)$ with $d(i)>d(j)$, then $d(i)$ is decreased by one, and $d(j)$ is increased by one. 
Hence, $\frac12 \sum_{i=1}^n |d(i)|$, where $d$ satisfies (\ref{eq:feas1}), (\ref{eq:feas2}) and (\ref{eq:opt}), is a {\it lower bound} on the number of cyclically adjacent transpositions needed to sort $\pi$.
The maximum value this lower bound can take is $\frac {n^2}{4}$, since $d$ satisfies (\ref{eq:opt}). It was shown in~\cite{FengCS10} that this bound is tight for the permutation $\pi^{-1}=(\frac n2+1, \frac n2+2, \ldots, n-1, n, 1, 2, \ldots, \frac n2)$, where $n$ is even.

One might conjecture that there always exists a swap such that
$\frac12\sum_{i=1}^n |d(i)|$ decreases by one, which would prove that $\frac{n^2}{4}$ is also an {\it upper bound} on the number of cyclically adjacent transpositions needed to sort any permutation $\pi$.
However, this is only if there exists an adjacent swap $(i,j)$ where $d(i)>0$ and $d(j)<0$. The following example shows that such a swap does not always exist: let $\pi^{-1}=(3, 2, 1, 4)$. Then $d=(2,0,-2,0)$ satisfies (\ref{eq:feas1}), (\ref{eq:feas2}) and (\ref{eq:opt}), but any cyclically adjacent swap does not decrease $\sum_{i=1}^n |d(i)|$. 
In this section, we use different techniques to show that the conclusion does hold that at most $\frac{n^2}{4}$ cyclically adjacent transpositions suffice to sort any permutation $\pi$.

We begin by stating two auxiliary lemmas. To maintain the flow of the argument, we defer their proofs until later in this section.
We slightly generalize the notion of permutation to be a bijection of any set $S$ of positive integers. Note that the net displacement vector of a sequence of cyclically adjacent swaps is still well defined as $d(i) = \sum_{j \neq i} c( i, j )$. The following lemma follows from Lemma~\ref{lem:Jerrum1}.

\begin{lemma}\label{lem:prop}
Let $\pi$ be a permutation of $S$, and let $Q = ( q_1, q_2, \ldots, q_m )$ be a sequence of cyclically adjacent swaps for $\pi$, with net displacement vector $d$ that satisfies (\ref{eq:opt}). Then for any two distinct elements $i,j\in S$,
\begin{enumerate}
\item[(a)]
$d(i)\ge d(j)$ implies $0\le c(i,j)\le 1$.
\item[(b)]
$d(i)=d(j)$ implies $c(i,j)=0$.
\item[(c)]
$d(i)-d(j)=n$ implies $c(i,j)=1$.
\end{enumerate}
\end{lemma}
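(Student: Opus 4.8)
The plan is to apply the closed-form expression for $c(i,j)$ from Lemma~\ref{lem:Jerrum1} directly, after first collapsing each of its two $\max$ expressions into a single ceiling. Writing $A = \pi(i)+d(i)$ and $B = \pi(j)+d(j)$, the inequality $A > B + mn$ is equivalent to $m < (A-B)/n$, so $\max\{m: A > B + mn\} = \lceil (A-B)/n\rceil - 1$. Substituting this into the two cases of Lemma~\ref{lem:Jerrum1} gives the compact form
\[
c(i,j) = \left\{\begin{array}{ll}
\left\lceil \frac{A-B}{n}\right\rceil & \mbox{if } \pi(i)<\pi(j),\\
\left\lceil \frac{A-B}{n}\right\rceil - 1 & \mbox{if } \pi(i)>\pi(j),
\end{array}\right.
\]
where $A - B = (\pi(i)-\pi(j)) + (d(i)-d(j))$. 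The single external fact I would record before the case analysis is that, since $i\neq j$ and the positions $\pi(i),\pi(j)$ are distinct integers in $\{1,\ldots,n\}$, we have $1\le |\pi(i)-\pi(j)|\le n-1$. The entire argument then consists of combining this position gap with the hypothesis on $d(i)-d(j)$ to trap $(A-B)/n$ inside an interval short enough to determine the ceiling.

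For part (a), the hypothesis $d(i)\ge d(j)$ and (\ref{eq:opt}) give $0\le d(i)-d(j)\le n$. In the subcase $\pi(i)<\pi(j)$ this yields $-(n-1)\le A-B\le n-1$, so $(A-B)/n\in(-1,1)$ and $\lceil (A-B)/n\rceil\in\{0,1\}$, hence $c(i,j)\in\{0,1\}$. In the subcase $\pi(i)>\pi(j)$ one gets $1\le A-B\le 2n-1$, so $(A-B)/n\in(0,2)$, $\lceil (A-B)/n\rceil\in\{1,2\}$, and again $c(i,j)=\lceil (A-B)/n\rceil-1\in\{0,1\}$. This proves (a).

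For part (b) I would avoid recomputing and instead invoke (a) twice: when $d(i)=d(j)$, both $d(i)\ge d(j)$ and $d(j)\ge d(i)$ hold, so $c(i,j)\in\{0,1\}$ and $c(j,i)\in\{0,1\}$; since $c(i,j)=-c(j,i)$, the only common value is $0$. For part (c) I would run the same ceiling computation with $d(i)-d(j)=n$ substituted: in the subcase $\pi(i)<\pi(j)$ one finds $1\le A-B\le n-1$, so $\lceil (A-B)/n\rceil=1$ and $c(i,j)=1$; in the subcase $\pi(i)>\pi(j)$ one finds $n+1\le A-B\le 2n-1$, so $\lceil (A-B)/n\rceil=2$ and $c(i,j)=2-1=1$.

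The only delicate point is bookkeeping: ensuring that the strict versus non-strict interval endpoints place $(A-B)/n$ on the correct side of each integer, so that the ceiling is forced either into a two-element set (part (a)) or to a single value (parts (b) and (c)). Because the position gap $|\pi(i)-\pi(j)|$ is always strictly between $0$ and $n$, the ratio $(A-B)/n$ never lands on a boundary that would make the ceiling ambiguous in the cases that must be pinned down exactly; I therefore do not expect a genuine obstacle beyond this careful interval arithmetic and the clean reduction of the $\max$ to a ceiling.
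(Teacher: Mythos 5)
Your core computation is correct, and it is essentially the paper's own argument in a different notation: collapsing $\max\{m : A > B + mn\}$ to $\lceil (A-B)/n\rceil - 1$ and trapping $(A-B)/n$ in a short interval is exactly the paper's step of bounding $\pi(i)+d(i)$ strictly between shifts of $\pi(j)+d(j)$ by multiples of $n$. Your derivation of (b) from (a) applied to both $(i,j)$ and $(j,i)$ together with skew-symmetry $c(i,j)=-c(j,i)$ is a valid small shortcut (the paper instead recomputes the intervals for the case $d(i)=d(j)$); skew-symmetry is free here since $c(i,j)$ is by definition the number of $(i,j)$ swaps minus the number of $(j,i)$ swaps.

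However, there is a genuine gap at the very first step: Lemma~\ref{lem:Jerrum1} cannot be applied ``directly.'' Its hypotheses are that $d$ satisfies (\ref{eq:feas1}) and (\ref{eq:feas2}) with respect to $\pi$, i.e., it applies only to sequences that \emph{sort} a permutation of $\{1,\ldots,n\}$. In Lemma~\ref{lem:prop}, by contrast, $\pi$ is a permutation of an arbitrary finite set $S$ of positive integers and $Q$ is an arbitrary sequence of cyclically adjacent swaps; $Q$ need not sort $\pi$, so (\ref{eq:feas2}) can simply fail, and your recorded fact that $\pi(i),\pi(j)$ are distinct integers in $\{1,\ldots,n\}$ is false whenever $S\neq\{1,\ldots,n\}$. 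This generality is not cosmetic: in the induction proving Theorem~\ref{lemma:main}, the lemma must be invoked for the truncated sequences $Q'$ acting on permutations of $S\setminus\{k\}$, which sort nothing in particular. The missing step --- the first paragraph of the paper's proof --- is a reduction: let $\tau$ be the permutation obtained by applying $Q$ to $\pi$, and relabel the elements (and, order-preservingly, the positions) of $S$ by $\{1,\ldots,n\}$ so that $\tau$ becomes the identity. This relabeling leaves every $c(i,j)$ and $d(i)$ unchanged and preserves the cyclic order of positions, and it turns $Q$ into a sorting sequence of a permutation of $\{1,\ldots,n\}$, at which point Lemma~\ref{lem:Jerrum1} and your ceiling computation apply verbatim. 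With that one paragraph added, your proof is complete and coincides with the paper's.
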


We will prove our main result by induction, and in order to use the inductive hypothesis, we will remove some element $k$ from $S$. We now define what we mean by a permutation corresponding to $\pi$ restricted to $S \setminus \{ k \}$.
First of all, we define the relationship ``$i$ is directly before $j$ in a permutation $\pi$'' if either $\pi( i ) < \pi( j )$ and there is no $\ell \in S$ so that $\pi( i ) < \pi( \ell ) < \pi( j )$, or $\pi( i ) = \max S$ and $\pi( j ) = \min S$, where $\max S$ and $\min S$ are the largest and smallest integer in $S$ respectively. We denote this relationship by $i \before j$.

Given a permutation $\pi$ of $S$ and a permutation $\pi': S\setminus\{k\} \to S\setminus\{k\}$. We say $\pi'$ is a permutation corresponding to $\pi$ restricted to $S \setminus \{ k \}$ if $\pi'$ preserves the $\before$-relationship, i.e. $i \before j$ in $\pi'$ for $i,j \in S'$ if $i \before j$ in $\pi$, or $i \prec k \prec j$ in $\pi$. We note that there are $|S|-1$ distinct permutations corresponding to $\pi$ restricted to $S \setminus \{ k \}$, but this will not be important for our purposes.

\begin{lemma}\label{lem:reduce}
Let $\pi$ be a permutation of $S$, and let $Q = ( q_1, q_2, \ldots, q_m )$ be a sequence of cyclically adjacent swaps for $\pi$, resulting in permutation $\sigma$. Let $Q'$ be the sequence of swaps, where all swaps involving element $k$ are deleted, and let $\pi'$ be any permutation corresponding to $\pi$ restricted to $S \setminus \{ k \}$.  Then $Q'$ is a sequence of cyclically adjacent swaps for $\pi'$, and performing $Q'$ on $\pi'$ results in a permutation corresponding to $\sigma$ restricted to $S \setminus \{ k \}$.
\end{lemma}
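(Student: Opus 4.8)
The plan is to reduce the statement to a single-swap version and then induct on the length $m$ of $Q$. The engine of the proof is the following sub-claim about one swap. Let $\tau$ be a permutation of $S$, let $\tau'$ be any permutation corresponding to $\tau$ restricted to $S\setminus\{k\}$, and let $q$ be a single cyclically adjacent swap for $\tau$ producing $\rho$. I claim that (i) if $q$ involves $k$, then the same $\tau'$ is also a permutation corresponding to $\rho$ restricted to $S\setminus\{k\}$; and (ii) if $q=(i,j)$ with $i,j\neq k$, then $q$ is a cyclically adjacent swap for $\tau'$ and applying it yields a permutation $\rho'$ corresponding to $\rho$ restricted to $S\setminus\{k\}$. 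Granting this sub-claim, the lemma follows by induction on $m$: the base case $m=0$ is precisely the hypothesis that $\pi'$ corresponds to $\pi$; for the inductive step I let $\pi_1$ be the result of $q_1$ applied to $\pi$ and apply the sub-claim to $q_1$, obtaining a permutation (either $\pi'$ itself in case (i), or a new $\pi'_1$ in case (ii)) that corresponds to $\pi_1$ restricted, to which I apply the induction hypothesis on $(q_2,\ldots,q_m)$. In case (i) the deleted swap $q_1$ drops out of $Q'$ and the starting permutation of the reduced sequence is unchanged; in case (ii) the swap $q_1$ is the first element of $Q'$ and is applied first. Reassembling shows that $Q'$ is a valid sequence for $\pi'$ ending in a permutation corresponding to $\sigma$ restricted to $S\setminus\{k\}$.

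To prove the sub-claim, I would work with the cyclic successor description of a permutation: writing $a\before b$ for ``$a$ is directly before $b$,'' each element has a unique successor, and the condition that $\tau'$ corresponds to $\tau$ restricted to $S\setminus\{k\}$ says exactly that the cyclic order of $\tau'$ is obtained from that of $\tau$ by \emph{splicing out} $k$, i.e. the predecessor of $k$ in $\tau$ now points to the successor of $k$, and all other adjacencies are inherited. With this reformulation the sub-claim becomes the assertion that splicing out $k$ commutes with performing $q$, where a swap involving $k$ becomes a no-op after splicing. I would verify this by a short local computation on the at most four consecutive elements affected by $q$. When $q$ involves $k$, say the relevant stretch of $\tau$ is $a,k,\ell,b$, both $\tau$ and $\rho$ splice to the same local order $a,\ell,b$, proving (i). When $q=(i,j)$ with $i,j\neq k$, the stretch $a,i,j,b$ becomes $a,j,i,b$; splicing out $k$ (whether $k$ coincides with the neighbour $a$, with the neighbour $b$, or lies elsewhere) and then comparing with $q$ applied to the spliced order gives the same result in every case, proving (ii). The validity of $q=(i,j)$ as a cyclically adjacent swap for $\tau'$ follows because $i\before j$ in $\tau$ forces $i\before j$ in $\tau'$ by the definition of ``corresponding restricted.''

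I expect the main obstacle to be the bookkeeping in the single-swap sub-claim, specifically the sub-cases in which $k$ is one of the two elements immediately adjacent to the swapped pair, together with the cyclic wrap-around: because the $\before$-relation treats the pair $(\max S,\min S)$ specially and removing $k$ can change which element plays the role of $\max S'$ or $\min S'$, one must check that splicing is well defined and consistent at the wrap point, and also dispose of the degenerate small cases (e.g.\ $|S|\le 3$) where the ``four consecutive elements'' picture partially collapses. Once these local checks are organized so that they are manifestly symmetric under the swap-then-splice versus splice-then-swap operations, the induction assembles with no further difficulty.
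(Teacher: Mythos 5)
Your proposal is correct and takes essentially the same approach as the paper: both proofs induct on the number of swaps and split into the cases where the swap does or does not involve $k$, verifying by a local check that the $\before$-relations are preserved (your ``splice out $k$'' formulation is just a compact restatement of the paper's definition of a corresponding restricted permutation, and your case (i)/(ii) are exactly the paper's case 2/case 1). The only cosmetic differences are that you peel off the first swap and apply the inductive hypothesis to the suffix, whereas the paper peels off the last swap and applies it to the prefix.
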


We now rephrase our main theorem, and use the two previous lemmas to prove it. 
By the results of the previous section, we have that for any permutation $\pi$ there exists a minimum length sequence of cyclically adjacent transpositions that sorts $\pi$, so that every pair of elements is swapped at most once, and the net displacement vector satisfies (\ref{eq:opt}). It is therefore sufficient to prove that a sequence of cyclically adjacent swaps with the properties that each pair of elements is swapped at most once, and the net displacement vector satisfies (\ref{eq:opt}), has length at most $\frac {n^2}4$, where $n$ is the number of elements.

\begin{theorem}\label{lemma:main}
Consider a sequence of cyclically adjacent swaps for a permutation $\pi$ of a set of $n$ elements, where each pair of elements is swapped at most once, and for which the net displacement vector $d$ satisfies (\ref{eq:opt}). Then the sequence consists of at most $\frac{n^2}4$ swaps.
\end{theorem}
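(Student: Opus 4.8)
The plan is to prove the statement by induction on $n$, using Lemma~\ref{lem:reduce} to pass from $n$ to $n-1$ elements by deleting a single, carefully chosen element $k$. If $Q$ is the given sequence of swaps, with net displacement vector $d$ satisfying (\ref{eq:opt}), and $Q'$ is obtained by deleting all swaps involving $k$, then the number of swaps in $Q$ equals the number in $Q'$ plus the number of swaps that involve $k$; since every pair is swapped at most once, the latter is exactly $\sum_{j\neq k}|c(j,k)|$. I would like to choose $k$ so that (i) this quantity is at most $\lfloor n/2\rfloor$, and (ii) the net displacement vector $d'$ of $Q'$ still satisfies (\ref{eq:opt}) with respect to $n-1$ elements, i.e. $\max_i d'(i)-\min_i d'(i)\le n-1$. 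Then, recalling that deleting $k$-swaps leaves $c(i,j)$ unchanged for $i,j\neq k$ so that $d'(i)=d(i)-c(i,k)$, Lemma~\ref{lem:reduce} certifies that $Q'$ is an admissible instance of size $n-1$ and the inductive hypothesis applies.

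To control (i), I would take $k$ to be an element attaining either $\max_i d(i)$ or $\min_i d(i)$. If $d(k)=\max_i d(i)$, then $d(k)\ge d(j)$ for all $j$, so Lemma~\ref{lem:prop}(a) gives $c(k,j)\in\{0,1\}$ for every $j$; hence every swap incident to $k$ is a swap $(k,j)$, and their number is $\sum_{j}c(k,j)=d(k)$. Symmetrically, if $d(k)=\min_i d(i)$ then $k$ is incident to exactly $-d(k)=|d(k)|$ swaps. Writing $D=\max_i d(i)$ and $m=\min_i d(i)$, constraint (\ref{eq:opt}) gives $D-m\le n$, while (\ref{eq:feas1}) gives $D\ge 0\ge m$, so $\min\{D,-m\}\le n/2$. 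Choosing $k$ to be an extreme element on the smaller side therefore makes it incident to an integer number of swaps that is at most $n/2$, hence at most $\lfloor n/2\rfloor$.

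The main work is step (ii). Taking $k$ with $d(k)=D$, each $c(i,k)\in\{-1,0\}$, so $d'(i)\in\{d(i),d(i)+1\}$; in particular every $d'(i)\le D$ (an element swapped with $k$ has $d(i)<D$, so $d(i)+1\le D$) and every $d'(i)\ge m$. Thus the range can only fail to drop to $n-1$ when $D-m=n$. In that case I would invoke Lemma~\ref{lem:prop}(c): every element $i$ with $d(i)=m$ satisfies $d(k)-d(i)=n$ and is therefore swapped with $k$, so $d'(i)=m+1$; hence no surviving element has $d'$-value $m$, giving $\min_i d'(i)\ge m+1$ and $\max_i d'(i)-\min_i d'(i)\le D-(m+1)=n-1$, as required. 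The case $d(k)=m$ is symmetric, since reversing the orientation of the circle negates all displacements and exchanges the roles of $\max$ and $\min$.

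Finally, the inductive hypothesis bounds the number of swaps in $Q'$ by $(n-1)^2/4$; since this count is a nonnegative integer, it is at most $\lfloor (n-1)^2/4\rfloor$. Adding the at most $\lfloor n/2\rfloor$ swaps incident to $k$ and using the identity $\lfloor (n-1)^2/4\rfloor+\lfloor n/2\rfloor=\lfloor n^2/4\rfloor$ yields the bound $\lfloor n^2/4\rfloor\le n^2/4$, with the base case $n=1$ immediate. I expect the delicate point to be precisely the range-preservation in step (ii): without the forced-swap guarantee of Lemma~\ref{lem:prop}(c) in the tight case $D-m=n$, deleting an element could leave the displacement range equal to $n$, so that $Q'$ would violate (\ref{eq:opt}) for $n-1$ elements and the inductive hypothesis would not apply.
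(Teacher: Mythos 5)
Your proof is correct and follows essentially the same route as the paper: the same induction deleting an extreme element $k$ with $|d(k)|\le \frac n2$, the same use of Lemma~\ref{lem:prop} to count the swaps incident to $k$ and to show the reduced displacement vector satisfies (\ref{eq:opt}) for $n-1$ elements, and Lemma~\ref{lem:reduce} to certify that the reduced sequence is valid. The only differences are cosmetic: you treat the tight case $d_{\max}-d_{\min}=n$ explicitly where the paper folds it into a chain of inequalities, you dispatch the $d(k)=d_{\min}$ case by a reversal symmetry rather than repeating the argument, and your closing identity $\lfloor (n-1)^2/4\rfloor+\lfloor n/2\rfloor=\lfloor n^2/4\rfloor$ is a slightly cleaner way to finish the arithmetic.
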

\begin{proof}
We prove the lemma by induction on $n$.
If $n=2$, then the lemma is clearly true, as in this case there is only one pair of elements, and this pair can be swapped at most once.

Now, assume the lemma is true for $n'=n-1$. Consider a sequence of cyclically adjacent swaps for a permutation $\pi$ of a set $S$ of $n$ elements that satisfies the conditions of the lemma.

Let $d_{\max} = \max_i d(i)$ and let $d_{\min} = \min_i d(i)$. Note that $d_{\max}-d_{\min} \le n$ by (\ref{eq:opt}), and hence either $d_{\max} \le \frac n2$ or $-d_{\min} \le \frac n2$.
In the first case, let $k$ be an element such that $d(k)=d_{\max}$; in the second case, we let $k$ be such that $d(k)=d_{\min}$.

In order to use the inductive hypothesis, we remove element $k$ from $S$. 
We let $Q'$ be the sequence of swaps, where all swaps involving element $k$ are deleted, and we 
let $\pi'$ be any permutation corresponding to $\pi$ restricted to $S \setminus \{ k \}$.  By Lemma~\ref{lem:reduce}, $Q'$ is a sequence of cyclically adjacent swaps for $\pi'$.
For the sequence $Q'$, let $d'(i)$ be the net clockwise displacement of element $i$ for any $i\in S \setminus \{ k \}$. Note that $d'( i ) = d( i ) - c( i,k )$.
Below, we will show that for any $i,j\in S \setminus \{ k \}$, we have $d'(i)-d'(j) \le n-1$.
Hence, $Q'$ corresponds to a sequence of cyclically adjacent swaps for a permutation of $S\setminus\{k\}$ with net displacement vector satisfying (\ref{eq:opt}) in which each pair of elements is swapped at most once. By the inductive hypothesis, $Q'$ can have at most $\frac{(n-1)^2}4$ swaps.
In addition, we will show that $k$ is involved in exactly $|d(k)|$ swaps. Since $|d(k)|\le \frac n2$, we conclude that the total number of swaps in the original sequence is at most $\frac{(n-1)^2}4+\frac n 2=\frac {n^2+1}4$. Since the number of swaps is integer, it can thus be at most $\lfloor \frac {n^2+1}4\rfloor = \frac{n^2}4$.

To prove the two claims, we use Lemma~\ref{lem:prop}.
First, suppose that $d(k)=d_{\max}$. Then
\begin{eqnarray*}
d'(i)-d'(j) &=& d(i)-c(i,k)-(d(j)-c(j,k))\\
&=& d(i)+c(k,i)-d(j)-c(k,j)\\
&\le& d(k)-d(j)-c(k,j)\\
&\le& n-1.
\end{eqnarray*}
The first inequality uses the fact that $d(k)\ge d(i)$,  so that $c(k,i)\le 1$ by property (a) of Lemma~\ref{lem:prop}, and $c(k,i)=0$ if $d(k)=d(i)$ by property (b).
The second inequality uses the fact that $d(k)-d(j)\le n$ by (\ref{eq:opt}), plus the fact that $c(k,j)\ge0$ by property (a), and $c(k,j)=1$ if $d(k)-d(j)=n$ by property (c).

The proof when $d(k)=d_{\min}$ is similar, and is included for completeness. In this case, we write
\begin{eqnarray*}
d'(i)-d'(j) &=& d(i)-c(i,k)-(d(j)-c(j,k))\\
&\le& d(i)-c(i,k)-d(k)\\
&\le& n-1.
\end{eqnarray*}
The first inequality uses the fact that $-d(j)\le -d(k)$, by definition of $k$, and property (a) and (b) in Lemma~\ref{lem:prop}.
The second inequality uses the fact that $d(i)-d(k)\le n$ by (\ref{eq:opt}), plus property (a) and (c) from Lemma~\ref{lem:prop}.

Finally, note that the number of swaps in which $k$ is involved is $\sum_i |c(i,k)|$, and by (a) and the definition of $k$, this is exactly equal to $|d(k)|$.
\end{proof}

We conclude by giving the proofs of Lemma~\ref{lem:prop} and Lemma~\ref{lem:reduce}.\\

\newenvironment{proof_of}[1]{\noindent {\bf Proof of #1:}
    \hspace*{1em} }{\hspace*{\fill} $\Box$ }

\begin{proof_of}{Lemma~\ref{lem:prop}}

Let $\tau$ be the permutation of $S$ obtained by applying $Q$ to $\pi$.
Relabel the elements of $S$ with $\{1, \ldots, n\}$ so that $\tau$ is equal to the identity, and use the same relabeling on $\pi$ and $Q$.
Then, $\pi$ is a permutation of $\{1, \ldots, n\}$ and $Q$ sorts $\pi$. Hence, the net displacement vector $d$ corresponding to $Q$ and $\pi$ satisfy the conditions of Lemma~\ref{lem:Jerrum1}. For any $(i,j)$ we thus have
\[c(i,j) = \left\{\begin{array}{ll}1+\max\{m: \pi(i)+d(i)> \pi(j)+d(j)+mn \}&\mbox{
if $\pi(i)<\pi(j)$,}\\
\max\{m: \pi(i)+d(i)> \pi(j)+d(j)+mn\} &\mbox{
if $\pi(i)>\pi(j)$}.
\end{array}\right.\]

We take $i,j$ such that $d(i)\ge d(j)$. Note that (\ref{eq:opt}) implies that $d(j)\le d(i)\le d(j)+n$.

Suppose that $\pi(i)>\pi(j)$. Note that $\pi(i)<\pi(j)+n$, since $\pi$ is a permutation.
Hence \[\pi(j)+d(j)<\pi(i)+d(i)<\pi(j)+d(j)+2n,\] so
$c(i,j)=\max\{m: \pi(i)+d(i)> \pi(j)+d(j)+mn\}\in \{0,1\}$.
Moreover, if $d(i)=d(j)$, then we have $\pi(j)+d(j)<\pi(i)+d(i)<\pi(j)+d(j)+n$, so $c(i,j)=0$, and if
$d(i)=d(j)+n$, then $\pi(i)+d(i)>\pi(j)+d(j)+n$, so $c(i,j)= 1$. 

Similarly, if $\pi(i)<\pi(j)<\pi(i)+n$, then
\[\pi(j)+d(j)-n<\pi(i)+d(i)<\pi(j)+d(j)+n.\] 
Since
$c(i,j)=1+\max\{m:\pi(i)+d(i)> \pi(j)+d(j)+mn \}$ we get that $0\le c(i,j)\le 1$.
If $d(i)=d(j)$, then
$\pi(j)+d(j)-n<\pi(i)+d(i)<\pi(j)+d(j)$, so $c(i,j)=0$, and if
$d(i)=d(j)+n$, then $\pi(j)+d(j)<\pi(i)+d(i)$, so $c(i,j)= 1$. 
\end{proof_of}

\begin{proof_of}{Lemma~\ref{lem:reduce}}

We prove the lemma by induction on the number of swaps $m$.
For $m=0$ the claim is vacuously true. For general $m$, denote by $\tau$ the permutation that results from performing the first $m-1$ swaps in $Q$ on $\pi$, and $\tau'$ the permutation that results from performing the corresponding swaps in $Q'$ on $\pi'$. By the inductive hypothesis we know that $\tau'$ is a permutation corresponding to $\tau$ restricted to $S \setminus \{ k \}$. We now discern two cases.

(case 1) $k$ is not swapped by $q_m = (i,j)$, i.e. $i\neq k$ and $j\neq k$. We note that
$i \before j$ in $\tau$ because $q_m = (i,j)$ is a valid cyclically adjacent swap for $\tau$. Then $i \before j$ in $\tau'$ by the fact that $\tau'$ is a restricted permutation, so $q_m$ is also a valid cyclically adjacent swap for $\tau'$.

Performing one cyclically adjacent swap only changes the $\before$-relationship  for pairs of elements for which at least one element is in $\{i,j\}$. 
Let $\ell$ be so that $\ell \prec i$ in $\tau$. Then $\ell \prec j$ in the permutation that results after swapping $q_m=(i,j)$ in $\tau$. If $\ell \neq k$ then $\ell \prec i$ in $\tau'$ as well, and therefore $\ell \prec j$ in the permutation that results after performing swap $q$ on $\tau'$. If $\ell = k$, then let $\ell'$ be so that $\ell' \prec k$ in $\tau$, which means that $\ell' \prec i$ in $\tau'$. After performing $q_m$ on $\tau'$, we have $\ell' \prec j$.  

Checking the $\prec$-relations in the restricted permutation for the element right after $j$ in $\tau$ proceeds similarly.

(case 2) $k$ is an element that is swapped by $q_m = ( i, j )$, i.e. $k\in\{i,j\}$. We let $\tau'$ be any permutation corresponding to $\tau$ restricted to $S \setminus \{ k \}$. Let $\ell$ be the element in $\{i,j\}$ that is not equal to $k$. Let $a$ and $b$ be so that $a \prec i \prec j \prec b$ in $\tau$. Then $a \prec \ell \prec b$ in $\tau'$. Also, we know that $a \prec j \prec i \prec b$ in the permutation $\sigma$ obtained after applying $q_m$ to $\tau$. This means that a permutation corresponding to $\sigma$ restricted to $S \setminus \{ k \}$ will have $a \prec \ell \prec b$ as well. In other words, $\tau'$ is a permutation corresponding to $\sigma$ restricted to $S \setminus \{ k \}$, since it has the required $\before$-relationship between the elements.
\end{proof_of}

\section*{Acknowledgement}
The authors acknowledge Chi-Kwong Li for suggesting this research topic, for leading weekly research meetings, and for numerous helpful discussions.

\bibliographystyle{plain}

\appendix

\section{Proof of Lemma~\ref{lem:Jerrum1}}
\begin{proof}
Consider a permutation $\pi$ and a net displacement vector $d$ that satisfies the conditions of the lemma.
First, we note that $c$ is skew symmetric, i.e., $c(i,j)=-c(j,i)$ for any $i\neq j$.
To see this, suppose without loss of generality that $\pi(i)<\pi(j)$, and let $c(i,j)=m$.
Then $\pi(i)+d(i)>\pi(j)+d(j)+(m-1)n$ and $\pi(i)+d(i) < p(j)+d(j) + mn$ (where the inequality is strict because of (\ref{eq:feas2})).
Therefore $\pi(j)+d(j) < \pi(i)+d(i) + (1-m)n$ and $\pi(j)+d(j) > \pi(i)+d(i)+(-m)n$, so $c(j,i)=-m$.

Applying a transposition $(p,p+1)$ results in a new permutation
$\tilde \pi$ and a new displacement vector $\tilde d$. Let $k=\pi^{-1}(p)$ and $\ell=\pi^{-1}(p+1)$, then $\tilde \pi$ and $\tilde d$ are given by
$\tilde \pi(i)=\pi(i), \tilde d(i)=d(i)$ for $i\neq k,\ell$ and $\tilde \pi(k)$ is $p+1$ if $p<n$ and $1$ if $p=n$, $\tilde d(k)=d(k)-1, \tilde \pi(\ell)=p, \tilde d(\ell)=d(\ell)+1$.
It is clear that $\tilde d$ satisfies the conditions of the lemma for $\tilde \pi$. Let
\[\tilde c(i,j) = \left\{\begin{array}{ll}1+\max\{m: \tilde \pi(i)+\tilde d(i)> \tilde \pi(j)+\tilde d(j)+mn \}&\mbox{
if $\tilde \pi(i)<\tilde \pi(j)$,}\\
\max\{m: \tilde \pi(i)+\tilde d(i)> \tilde \pi(j)+\tilde d(j)+mn\} &\mbox{
if $\tilde \pi(i)>\tilde \pi(j)$}.
\end{array}\right.\]
In order to prove the lemma, we need to show that swapping $(k,\ell)$ decreases $c(k,\ell)$ by one, i.e., $\tilde c(k,\ell)=c(k,\ell)-1$. By skew symmetry, this also implies that $\tilde c(\ell,k)=c(\ell,k)+1$.
In addition, we need to show that  for any pair $(i,j)\neq \{(k,\ell), (\ell,k)\}$, $c(i,j)=\tilde c(i,j)$.

We first consider a pair of elements $(i, j)\not \in\{ (k,\ell),(\ell,k)\}$.
If $p<n$, then $\tilde c(i,j)=c(i,j)$ because $\pi(i)+d(i)=\tilde \pi(i)+\tilde d(i)$ for every $i\in \{1, \ldots, n\}$ and the relative order of all pairs of elements, except $(k,\ell)$ is the same in $\pi$ and $\tilde\pi$.
If $\pi=n$, then the relative order of every pair containing $k$ or $\ell$ is changed, but it is easily verified that the fact that $\tilde \pi(k)+\tilde d(k)=\pi(k)+d(k)-n$ and $\tilde \pi(\ell)+\tilde d(\ell)=\pi(\ell)+d(\ell)+n$ implies that $c(i,j) = \tilde c(i,j)$ unless $(i, j)$ is $(k,\ell)$ or $(\ell,k)$.

We now consider the pair $(k,\ell)$, and show that $\tilde c(k,\ell) = c(k,\ell)-1$. If $p<n$, then $\pi(k)<\pi(\ell)$ and $\tilde \pi(k)>\tilde \pi(\ell)$. Also, $\tilde \pi(k)+\tilde d(k)=\pi(k)+d(k)$ and $\tilde \pi(\ell)+\tilde d(\ell)=\pi(\ell)+d(\ell)$. Hence, $\tilde c(k,\ell)=\max\{m: \pi(k)+d(k)>\pi(\ell)+d(\ell)+mn\}$ and $c(k,\ell)=1+\max\{m:\pi(k)+d(k)>\pi(\ell)+d(\ell)+mn\}$.
We thus have that $\tilde c(k,\ell)= c(k,\ell)-1$.
If $p=n$, then $\pi(k)>\pi(\ell)$, $\tilde \pi(k)<\tilde \pi(\ell)$, $\tilde \pi(k)+\tilde d(k)=\pi(k)+d(k)-n$ and $\tilde \pi(\ell)+\tilde d(\ell)=\pi(\ell)+d(\ell)+n$.
Therefore, $\tilde c(k,\ell)=1+\max\{m: \pi(k)+d(k)-n>\pi(\ell)+d(\ell)+n+mn\}=-1+\max\{m':\pi(k)+d(k)>\pi(\ell)+d(\ell)+m'n\}$ and $c(k,\ell)=\max\{m:\pi(k)+d(k)>\pi(\ell)+d(\ell)+mn\}$. Hence, we again have that $\tilde c(k,\ell)=c(k,\ell)-1$.
\end{proof}

\end{document}